\newtheorem{definition}{Definition}
\newtheorem{theorem}{Theorem}
\newtheorem{lemma}{Lemma}
\newtheorem{remark}{Remark}
\begin{document}
%
\title{Optimal Demand Response and Supply Schedule Under Market-Driven Price: A Stackelberg Dynamic Game Approach}
%
%
%

\author{Yunhan Huang
\thanks{Yunhan Huang is with the Department
of Electrical and Computer Engineering, New York University, USA.}
\thanks{The manuscipt is written in October, 2017.}}

%
%

\markboth{arXiv}%
{Shell \MakeLowercase{\textit{et al.}}: Bare Demo of IEEEtran.cls for IEEE Journals}
%



\maketitle

\begin{abstract}
In this work, we use a Stackelberg infinite discrete-time dynamic game model to study the optimal supply schedule and the optimal demand response under a market-driven dynamic price. A two-layer optimization framework is established. At the lower layer, for each user, different appliances are scheduled for energy consumption. For enegy provider, different generators are utilized for energy generation. At the upper level, with the supplier acting as a leader and the users acting as followers, a Stackelberg dynamic game is used to capture the interaction among the energy provider and the users where the energy provider and the users care only about their own cost. We analyze the one-leader-N-followers Stackelberg dynamic game and characterize the Stackelberg equilibrium. We provide a closed-form Nash solution of the optimal dynamic demand response problem when the supply is announced. A set of linear constraints is developed to characterize the Stackeberg equilibrium. Simulation results show that the price is driven to a reasonable value. Also, the total demand and the supply is balanced. 
\end{abstract}

\begin{IEEEkeywords}
Smart Grid, Dynamic Game, Stackelberg Equilibrium, Dynamic Price, Distributed Demand Response.
\end{IEEEkeywords}

%
\IEEEpeerreviewmaketitle

\section{Introduction}
%
%
%
%
Equipped with modern information, communication and electronics technology, the goal of providing reliable, efficient, secure, and quality energy generation/distribution/consumption in power grid becomes achievable. People uses smart grid to describe this kind of next generation electrical power grid. Developing smart grid has become an urgent global priority as its economic, environmental, and societal benefit will be enjoyed by generations to come. 

Smart grid enables the end-customers and the suppliers to participate into the market to provide a cost-effective and reliable energy demand and supply. The reliable and efficient communication infrastructure plays a key role in connecting the supply side and the demand side and managing, controlling, optimizing different devices and systems in the smart grid. In the demand side, people depends on demand side management to minimize the power cost, reduce the fluctuation of power grid \cite{Mohsenian10,Quanyan12,Kamyab16}. Direct load control and smart pricing are two popular approaches in demand side management. For smart pricing, researchers focus on how to design the price for manager/authority/utility company to reduce consumption at peak hours, attract more customers, maximize their profit \cite{Paola17,Ghasemkhani18,Li11}. Usually, the price is set to be a function of the total demand of all users. There is little work in the literature, which considers the price that is driven by the market. Market-driven dynamic price is determined by not just the current total demand but also the supply, the past demand and the past price. Besides, many works study just the demand side management and simplify the effects of supply side \cite{Quanyan12,Juntao17}. However, the interaction between the demand side and the supply side is non-negligible.

In this paper, we jointly consider the optimal demand response and the optimal supply schedule under the market-driven dynamic price. The market price is modeled by the sticky price dynamics where the current price is determined by demand, supply, past price. Moreover, to characterize the price volatility, a stochastic random variable is added to the dynamics. The supplier in the market aims to minimize his cost (maximize his profit) by schedule his supply based on the knowledge on the price dynamics. Users in the market minimize their cost by selecting an optimal demand response based on the knowledge on the price dynamics and the scheduled supply. The interactions between the price, the demand and the supply are intertwined and hart to model. Here, we utilize a stochastic discrete-time infinite dynamic game with finite horizon to model this scenario and the Stackelberg equilibrium solution \cite{Maharjan13, JuntaoSt17} is used to characterize the optimal demand response and the optimal supply schedule. We obtain a closed-form Nash equilibrium solution which refers to the optimal demand response corresponding to a given supply schedule. Further, a set of linear constraints are developed to characterize the Stackelberg equilibrium. We call the problem modeled by the stochastic dynamic game the upper level problem. In the lower level, for each user, an optimal power allocation problem is solved individually for scheduling different appliances to maximize the utility. On the supply side, the supplier allocates the work of generation to different generators. The lower level problems are modeled by a set of decoupled convex/concave static optimization problems. In the simulation results, we present the evolution of the market-driven dynamic price and the corresponding optimal demand response and optimal supply schedule. The results show that the demand and the supply can be tuned to be balanced by the dynamic price.

The remainder of the paper is structured as follows. The two layer optimization framework is formulated in Section \ref{ProFor}. In Section \ref{Sec:Ana}, we analyze the proposed two-level optimization and dynamic game model. A closed-form Nash solution of the users corresponding to supply schedule is obtained in a closed-form. A set of linear constraints are developed to characterize the Stackelberg equilibrium. Simulation results are studied in Section \ref{Sec:Sim}. Conclusions are included in Section \ref{Sec:Con}.

\section{Problem Formulation}\label{ProFor}
In this section, we propose a Stackelberg dynamic game model for optimal demand response and supply schedule in the smart grid. The framework is comprised of two levels of problems as illustrated in Fig. \ref{fig:Framework}. At the upper level, the energy supplier announces the energy supply schedule. The energy supply is scheduled to minimize the energy supplier's cost. For users, the demand response management device of each user purchases power from the market based on its demand and the price. At the lower level, the energy supplier allocates the work of generation to multiple generators like wind driven generator, solar generator etc. The scheduler of each user allocates its demand power to multiple appliances such as the refrigerator, the plug-in hybrid electric vehicle (PHEV), microwaves etc. At the upper level, the decisions are made by solving hierarchical optimization problem where the supplier announces its supply while the schedulers of the users decide their demands. Both the supplier and the users aim to minimize their own cost based on the market-driven dynamic price. So, at the upper level, the supplier and the users constitute the leader and the followers of a stackerlberg dynamic game. While at the lower level, the optimal power allocation within each user is decoupled from other users. And the supplier arranges its generation independently. Hence, the decisions at the lower level can be modeled by static convex optimization models.

\begin{figure}
    \centering
    \includegraphics[width=\linewidth]{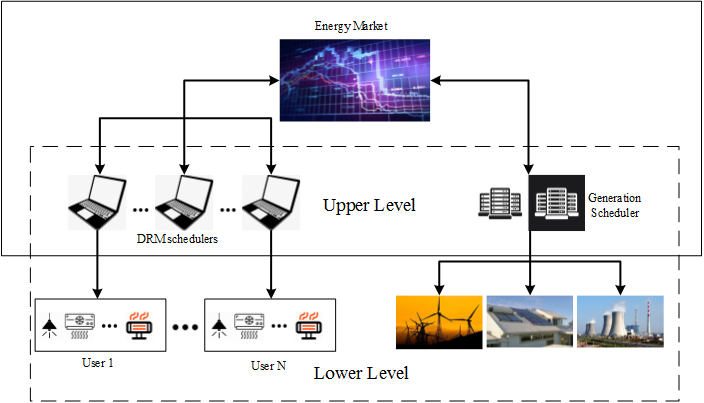}
    \caption{Framework: Upper level demand response and supply schedule problem and lower level consumption and generation allocation problem.}
    \label{fig:Framework}
\end{figure}

\subsection{Upper Level: A Stackelberg Dynamic Game Model}
Let $l$ denote the power supplier and $\mathcal{N}$ be the set of users or households in the power grid. Let $s \geq 0$ be the power supply in the grid which is decided by the energy supplier. And $d_t^i\geq0, i\in \mathcal{N}$ denotes the demand of the $i$-th player at time $t\in \mathbf{Z}_{+}$ in the grid. The power price $p_t$ is a market-driven dynamic price which evolves according to the following dynamics \cite{Brigo03, DeJong06, Hou05}:
\begin{equation}\label{PriceDynamics}
p_{t+1}=(1+\omega_t)p_t+f(d_t,s_t)+\theta_t
\end{equation}
with initial condition $p_1>0$, where $\omega_t<0$ is the parameter that suggests the stickiness of the price in response to the demand and supply; $d_t=\sum_{i\in\mathcal{N}} d_t^i$; $\theta_t,t\in \mathbf{Z}_+$ is a set of statistically independent Gaussian random numbers with zero mean, and $cov(\theta_t)>0, \forall t \in \mathbf{Z}_+$, which models the volatility in the market \cite{Brigo03}; $f(\cdot,\cdot):\mathbf{R}_+ \times \mathbf{R}_+ \rightarrow \mathbf{R}$ is a force function that drives the price which is based on the total demand $d_t$ and the supply $s_t$. The price is driven up if the demand is high but is driven down when the supply is high. The price at time $t+1$, $p_{t+1}$, is based on by the price, the total demand and the supply at time $t$, i.e., $p_t$, $d_t$ and $s_t$. This is due to the price delay: the delay with which the energy response to information \cite{Hou05}. The price could be negative .We let $f(\cdot,\cdot)$ take the following form:
\begin{equation}\label{spf}
    f(d_t,s_t)=\gamma(d_t-s_t).
\end{equation}

The objective of user $i$ is to minimize his own cost by deciding the demand $d_t^i$ at each time slot $t$ based on the evolution of the price and his own utility. We utilize the following cost functional to characterize the cost of user $i$ during time $1$ to $T$:
\begin{equation}\label{CostFollows}
    C^i(d^i,\mathbf{d}^{-i}, \mathbf{s}, p_t)=\mathbf{E}\sum\limits_{t=1}^{T} -u^i(t,d_t^i)+p_t d_t^i +\frac{1}{2} \alpha_t^i {d_t^i}^2
\end{equation}
where $\mathbf{d}^{-i}=\{ d^j,j\neq i,j\in\mathcal{N} \}$; $u^i(\cdot,\cdot):\mathbf{T}\times \mathbf{R}_+\rightarrow \mathbf{R}_+$ is the stage utility function that describes the utility obtained by consuming energy $d_t^i$, where $\mathbf{T}=\{1,2,...,T\}$ denotes the set of time slots. The second term characterizes the payment of user $i$ consuming $d_t^i$ unit of power. The quadratic term $\frac{1}{2} \alpha_t^i {d_t^i}^2$ is the user $i$'s adjustment cost on power usage where $\alpha_t^i$ is positive for all $i\in\mathcal{N}$ and $t\in\mathbf{T}$. The supply schedule $\mathbf{s}=\{s_t,t\in\mathbf{T}\}$.

The aim of the energy supplier is to minimize his own cost which is written as follows:
\begin{equation}\label{CostLeader}
    C^l(\mathbf{s},\mathbf{d},p_t)=\mathbf{E}\sum\limits_{t=1}^T \beta(t,s_t)-p_t s_t + \frac{1}{2}\kappa_t(s_t-d_t)^2
\end{equation}
where $\beta(\cdot,\cdot):\mathbf{T} \times \mathbf{R}_+\rightarrow \mathbf{R}_+$ is the generation cost that describes the cost of generating $s_t$ unit of power at each time slot $t$. The second term is the profit of selling $s_t$ unit of power and the last one represents a penalty to the leader who as a leader has the responsibility to balance the supply and demand where $\kappa_t\in\mathbf{R}_++$. 

The energy supplier who acts as a leader announces his generation strategy to the market. The users who react rationally to the leader's decision can be called followers. Both the energy supplier and the users aim to minimize their own cost subject to the evolution of the price which constitutes a one-leader-N-followers Stackelberg dynamic game. In this game, the leader aims to minimize the cost functional (\ref{CostLeader}) and the followers minimize the cost functional (\ref{CostFollows}) subject to the price dynamics (\ref{PriceDynamics}).

\subsection{Lower Level: Optimal Local Allocation}
In this part, the discuss how the users and the energy supplier allocates their consumption and generation at the lower level.
\subsubsection{Optimal Generation Allocation}
The energy supplier is equipped with multiple resources for power generation. For example, in a microgrid, the supplier can utilize wind generator, solar generator and thermal generator. Let $\mathcal{G}$ be the set of $G$ generators for the supplier. At each time $t$, the power generated by generator $g$ is $e^g_t$. We define an energy generation scheduling vector $\mathbf{e}_t=[e_t^1, e_t^2,...,e_t^G] \in \mathbf{R}_+^G$. The supplier seeks to allocate power generation task to its generators by minimizing the total cost subject to the power generation constraint $s_t$. This is described by the the generation optimization problem (GO) as follows.
\begin{equation}
\begin{aligned}
    \textrm{(GO)}\ \min \limits_{\mathbf{e}_t \in \mathbf{R}_+^G} &\sum\limits_{g\in\mathcal{G}} \frac{1}{2} \delta_t^g {e_t^g}^2 \\
    \textrm{s.t.} & \sum\limits_{g\in\mathcal{G}}e_t^g =s_t,
\end{aligned}
\end{equation}
where $\delta_t^g$ denotes the cost parameter of generator $g$ at time $t$. $\frac{1}{2} \delta_t^g {e_t^g}^2$ is the cost of generator $g$ generating $e_t^g$ unit of power. At different time, generators have different cost parameters. For example, the cost parameter of solar generator is low during daytime while the cost parameter of oil-fed generator won't change too much over time.

The optimization problem (GO) is a convex optimization problem whose solution can be easily obtained \cite{Luenberger97}:
\begin{equation}
    \lambda_t = \frac{-s_t}{\sum_{g\in\mathcal{G}}1/\delta_t^g},\textrm{    }e_t^g=\frac{s_t}{\delta^g_t \sum_{g\in\mathcal{G}}1/\delta_t^g} 
\end{equation}
where $\lambda_t$ is the Lagrange multiplier. Based on the optimal solution, we derive the following
\begin{equation}
\beta(t,s_t)=\frac{1}{2}\bar{\delta}_t s_t^2,
\end{equation}
where $\bar{\delta}_t= \sum_{g\in\mathcal{G}}\frac{1}{\delta_t^g}(\frac{1}{ \sum_{g\in\mathcal{G}}1/\delta_t^g})^2$.

\subsubsection{Optimal Consumption Allocation}
Each user $i$ optimizes its power usage by effectively allocating power to appliances. Let $\mathcal{A}^i=\{a^i_1, a^i_2,...,a^i_{A^i}\}$ be the set of $A^i$ appliances for user $i$ such as air-conditioner, heater, dryer, PHEV etc. Define an energy consumption scheduling vector $\mathbf{v}^i_t=[v_t^{i,1},v_t^{i,2},...,v_t^{i,A^i}] \in \mathbf{R}_+^{A^i}$, where $v_t^{i,a}$ denotes the power consumption that is scheduled for appliance $a \in \mathcal{A}^i$ of user $i$. Each user $i$ seeks to allocate power to its appliances by maximizing their their total utilities. This can be written as the consumption optimization problem (CO) as follows.
\begin{equation}
\begin{aligned}
    \textrm{(CO)}\ \max\limits_{\mathbf{v}_t^i \in \mathbf{R}_+^{A^i}} &L^i(\mathbf{v}_t^i)=\exp \Big\{\sigma^i \sum_{a^i \in \mathcal{A}^i} \psi_t^{i,a} \ln (v_t^{i,a}) \Big\}\\
    \textrm{s.t.} & \sum\limits_{a\in\mathcal{A}^i} v_t^{i,a} =d_t^i,
\end{aligned}
\end{equation}
where $\sigma^i\geq 0$ is a sensitivity parameter with respect to the demand;  $\psi_t^{i,a}\geq 0, a\in\mathcal{A}^i$  is a parameter that indicates the importance and time-sensitivity of each appliance $a \in \mathcal{A}^i$. Without loss of generality, we assume that $\sum_{a\in\mathcal{A}^i} \psi_t^{i,a}=q$. The objective function $L^i: \mathbf{R}_+^{A^i}\rightarrow \mathbf{R}_+$ can be equivalently written into the form of products, i.e., $L^i(\mathbf{v}_t^{i,a})=\prod_{a\in\mathcal{A}^i} ({v_t^{i,a}})^{\psi_t^{i,a}\sigma^i}$. Hence, the objective function is analogous to Nash bargaining problems \cite{Yaiche00} and the outcome of its solution embodies proportional fairness \cite{Kelly98}. Due to the monotonicity of the exponential function, (CO) is related to the following equivalent consumption optimization problem (ECO):
\begin{equation}
\begin{aligned}
    \textrm{(ECO)} \max\limits_{\mathbf{v}_t^i \in \mathbf{R}_+^{A^i}} &\tilde{L}^i(\mathbf{v}_t^i) := \sum\limits_{a \in \mathcal{A}^i} \psi_t^{i,a}\ln(v_t^{i,a})\\
    \textrm{s.t.}& \sum\limits_{a\in\mathcal{A}^i} v_t^{i,a} =d_t^i,
\end{aligned}
\end{equation}
where $\tilde{L}^i: \mathbf{R}_+^{A^i} \rightarrow \mathbf{R}$ and $L^i=\exp(\tilde{L}^i)$. The optimization problem (ECO) is concave. So we obtain the optimal solution as follows.
\begin{equation}
    v_t^{i,a}=\psi_t^{i,a} d_t^i,\textrm{     }\lambda_t^i=1/d_t^i.
\end{equation}
The utility value of user $i$ is the optimal value of the optimization problem (CO). Hence, the utility value can be obtained as
\begin{equation}
    u^i(t,d_t^i)= \bar{\psi}_t^{i} {d_t^i}^{\sigma^i q},
\end{equation}
where $\bar{\psi}_t^{i}=\exp(\sum_{a\in \mathcal{A}^i}\psi_t^{i,a} \ln \psi_t^{i,a})\in\mathbf{R}_+$.

\section{Analysis of the Stackelberg Dynamic Game Model}\label{Sec:Ana}
The lower level decoupled optimization problem together with the upper level Stackelberg dynamic game can be studied jointly as a stochastic dynamic game with Stackelberg solution. In this section, we analyze the stochastic dynamic game and characterize the Stackelberg equilibrium solutions.

\subsection{Generalized Linear Quadratic Dynamic Games}
With the results of problems (CO) and let $\sigma^i=1/q$ for all $i\in\mathcal{N}$, the cost of user $i$ can be specified:

\begin{equation}\label{UserCost}
        C^i(d^i,\mathbf{d}^{-i}, \mathbf{s},p_t)=\mathbf{E}\sum\limits_{t=1}^{T} -\bar{\psi}_t^i {d_t^i} +p_t d_t^i +\frac{1}{2} \alpha_t^i {d_t^i}^2.
\end{equation}

For the supplier, with the optimal generation allocation, his cost can be expressed as
\begin{equation}\label{SupplierCost}
    C^l(\mathbf{s},\mathbf{d},p_t)=\mathbf{E}\sum\limits_{t=1}^T \frac{1}{2}\bar{\delta}_t s_t^2-s_t p_t + \frac{1}{2} \kappa_t (s_t-d_t)^2 .
\end{equation}

The cost functionals (\ref{UserCost}) and (\ref{SupplierCost}) together with the dynamics of the price (\ref{PriceDynamics}) and (\ref{spf}) from a generalized stochastic linear quadratic dynamics game. To characterize the Stackelberg equilibrium of the game with open-loop information structure for both leader and followers \cite{TBBook}, the concept of the open-loop Stackelberg equilibrium solution is defined below.

\begin{definition}
For a dynamic game in the form of (\ref{UserCost}), (\ref{SupplierCost}) and (\ref{spf}), the strategy set $\{\mathbf{s}^*,\{\mathbf{d}^{i*},i\in\mathcal{N} \}\}$ with the corresponding price $p_t^*$ constitutes an open-loop Stackelberg equilibrium with the supplier acting as the leader and the users acting as the followers if the following conditions are satisfied.
\begin{enumerate}
    \item $\{\mathbf{d}^{i*}\in\mathbf{R}_+^{T},i\in\mathcal{N} \}$ is a Nash equilibrium given $\mathbf{s}^*$, i.e., $\forall i \in\mathcal{N}$, we have
    \[C^i(\mathbf{d}^{i*},\mathbf{d}^{-i*},\mathbf{s}^*,p_t^*)\leq C^i(\mathbf{d}^i,\mathbf{d}^{-i*},\mathbf{s}^*,p_t),\ \ \  \forall \mathbf{d}^i \in \mathbf{R}_+^T.
    \]
    \item $\{ \mathbf{s}^*\in\mathbf{R}_+^T \}$ satisfies
    \[
        C^l(\mathbf{s}^*,\Gamma(\mathbf{s}^*),p_t^*)\leq C^l(\mathbf{s},\Gamma(\mathbf{s}),p_t),\ \ \ \forall \mathbf{s}\in\mathbf{R}_+^T,
    \]
    where $\Gamma(\mathbf{s})$ denotes the optimal response of the users, i.e., the Nash equilibrium response given $\mathbf{s}$.
\end{enumerate}
\end{definition}
To deal with the randomness in the linear-quadratic dynamic game, we have the following theorem.
\begin{theorem}
The set of open-loop stackelberg equilibrium solutions for the stochastic linear quadratic dynamics game (\ref{UserCost}), (\ref{SupplierCost}) and (\ref{spf}) corresponds with the set of open-loop Stackelberg solutions for its deterministic counterpart, i.e., in lieu of (\ref{PriceDynamics}), the price evolves based on the following rule
\begin{equation}\label{PriceDynamics2}
p_{t+1}=F_t(p_t,d_t,s_t),\ \ \ p_{t=1}=p_1
\end{equation}
where $F_t(p_t,d_t,s_t)=(1+\omega_t)p_t+\gamma(d_t-s_t)$ and in lieu of (\ref{UserCost}) and (\ref{SupplierCost}), the cost functionals are deterministic in the following forms:
    \begin{align}
    \label{UserCost2}
    &C^i(d^i,\mathbf{d}^{-i}, \mathbf{s},p_t)=\sum\limits_{t=1}^{T} -\bar{\psi}_t^i {d_t^i} +p_t d_t^i +\frac{1}{2} \alpha_t^i {d_t^i}^2,\\ 
    \label{SupplierCost2}
    &C^l(\mathbf{s},\mathbf{d},p_t)=\sum\limits_{t=1}^T \frac{1}{2}\bar{\delta}_t s_t^2-s_t p_t + \frac{1}{2}\kappa_t (s_t-d_t)^2 .
    \end{align}
\end{theorem}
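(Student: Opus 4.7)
The plan is to exploit two features of the open-loop information structure: strategies are deterministic functions of time (they cannot depend on realizations of $\theta_t$), and the cost functionals depend on the random variable $p_t$ only linearly (there is no $p_t^2$ term in either \eqref{UserCost} or \eqref{SupplierCost}). Together these two facts let me commute the expectation past everything and reduce the stochastic problem to its deterministic counterpart without loss.

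First I would fix arbitrary open-loop strategies $\mathbf{s}\in\mathbf{R}_+^T$ and $\{\mathbf{d}^i\in\mathbf{R}_+^T\}_{i\in\mathcal{N}}$ and unroll the price recursion \eqref{PriceDynamics}. Because $\{\theta_t\}$ is zero-mean and independent, and because $d_t,s_t$ are deterministic under open-loop, taking expectations on both sides of \eqref{PriceDynamics} gives
\begin{equation*}
\mathbf{E}[p_{t+1}] = (1+\omega_t)\,\mathbf{E}[p_t] + \gamma(d_t-s_t), \qquad \mathbf{E}[p_1]=p_1.
\end{equation*}
This is exactly the deterministic recursion \eqref{PriceDynamics2}. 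So if I let $\bar p_t$ denote the trajectory produced by $F_t$ starting from $p_1$ under the same $(\mathbf{s},\mathbf{d})$, then $\mathbf{E}[p_t]=\bar p_t$ for all $t\in\mathbf{T}$.

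Next I substitute this into the cost functionals. In \eqref{UserCost}, the only stochastic term is the cross product $p_t d_t^i$; since $d_t^i$ is deterministic, $\mathbf{E}[p_t d_t^i]=\bar p_t\,d_t^i$, and the remaining $-\bar\psi_t^i d_t^i$ and $\tfrac12\alpha_t^i(d_t^i)^2$ terms are already deterministic. Hence $C^i$ as defined in \eqref{UserCost} equals, term by term, the deterministic cost \eqref{UserCost2} evaluated along $\bar p_t$. An identical argument handles \eqref{SupplierCost} versus \eqref{SupplierCost2}, where the only stochastic contribution is $-s_t p_t$. Thus the stochastic objectives of the leader and of each follower coincide pointwise (as functions of $(\mathbf{s},\mathbf{d})$) with their deterministic counterparts.

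Finally I would invoke the definition of the open-loop Stackelberg equilibrium directly: both equilibrium inequalities in the definition compare values of $C^i$ and $C^l$ at admissible open-loop strategies, and since those values agree with the deterministic ones, the inequalities hold for the stochastic game if and only if they hold for the deterministic one. In particular, the Nash best-response map $\Gamma$ at the lower level is unchanged, and so the leader's outer minimization problem is unchanged. Therefore the two equilibrium sets coincide. The only subtle step is the one in the previous paragraph---making sure the expectation genuinely splits---and that goes through precisely because the information structure is open-loop and the $p_t$-dependence of both costs is affine.
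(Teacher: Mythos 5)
Your proof is correct, and it reaches the paper's conclusion by a more self-contained route. The paper's own proof is essentially a one-line appeal to the general separation result for open-loop linear-quadratic stochastic dynamic games in the cited reference: because $\{\theta_1,\dots,\theta_T\}$ is zero-mean and statistically independent of $p_1$, ``the stochastic contribution completely separates out,'' and the equilibrium sets therefore coincide. You instead verify the separation directly, and in doing so you isolate the precise structural feature that makes it work here: under open-loop information the controls $(\mathbf{s},\mathbf{d})$ are deterministic sequences, so $p_t$ decomposes as the deterministic trajectory $\bar p_t$ generated by (\ref{PriceDynamics2}) plus a zero-mean, control-independent disturbance, and since both (\ref{UserCost}) and (\ref{SupplierCost}) are \emph{affine} in $p_t$, taking expectations eliminates the disturbance exactly. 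This buys you slightly more than the general theorem gives: the stochastic and deterministic costs are equal pointwise as functions of the strategy profile, not merely equal up to an additive constant independent of the controls (which is all one obtains, and all one needs, when the stage cost contains a $p_t^2$ term). Either statement suffices to conclude that the followers' Nash best-response map and the leader's outer minimization are unchanged, hence that the two sets of open-loop Stackelberg equilibria coincide; your version is the more elementary and the more transparent of the two, at the cost of not generalizing immediately to costs quadratic in the state.
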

\begin{proof}
For linear-quadratic stochastic dynamic games and under the open-loop information structure, if the set $\{\theta_1,...,\theta_T\}$ is statistically independent of $p_1$, which was our assumption previous section, then the stochastic contribution completely separates out \cite{TBBook}. Also, we have $\mathbf{E}[\theta_t]=0,\forall t\in\mathbf{T}$. Hence, the open-loop Stackelberg matches completely with its deterministic counterpart.
\end{proof}
To obtain the Stackelberg equilibrium, we need to characterize the optimal response of the followers which is a Nash equilibrium when $\mathbf{s}$ is given. Given $s_t, t\in\mathbf{T}$, the optimal response can be obtained by solving a affine-quadratic dynamic game. One way of obtaining the Nash equilibrium solution is to view the affine-quadratic dynamic game as static infinite game by backward recursive substitution of (\ref{PriceDynamics2}) into (\ref{UserCost2}). The static infinite game is with cost function $\hat{C}^i(\mathbf{d}^i,\mathbf{d}^{-i})$ for $i\in \mathcal{N}$. The existence of pure Nash equilibrium for this equivalent static game can be guaranteed \cite{TBBook} and the uniqueness is guaranteed if some conditions on $\gamma$, $\alpha_t^i$ and $\omega_t$ for $t\in\mathbf{T}$ are satisfied. These conditions can be easily obtained by discussing the static quadratic game formed from the affine-quadratic dynamic game. Here, we assume that the Nash equilibrium of the affine-quadratic game is unique. Another way is to utilize the Pontryagin's maximum principle method which gives the following lemma.

\begin{lemma}\label{NashNec}
To any announced supply strategy $s_t\in\mathbf{R}_+$ for $t\in\mathbf{T}$, there exists an optimal response of the followers denoted by $\{\hat{d}^i_t,i\in\mathcal{N}, t\in\mathbf{T}\}$ satisfying the following relations: for all $i\in\mathcal{N}$ and $t\in\mathbf{T}$
\begin{align}
\label{NashNecDy}
\hat{p}_{t+1}=(1+\omega_t)\hat{p}_t+\gamma(\sum\limits_{i\in\mathcal{N}} \hat{d}_t^i - s_t),\ \ \ \hat{p}_1=p_1,\\
\label{OptimalResponse}
\hat{d}_t^{i}=\max\Big\{ \frac{\bar{\psi}_t^i-\hat{p}_t-\lambda_{t+1}^i\gamma}{\alpha_i^t},0 \Big\},\\
\label{CostateDynamics}
\lambda_t^i=(1+\omega_t)\lambda_{t+1}^i+\hat{d}_t^i,\ \ \ \lambda_{T+1}^i=0.
\end{align}
Here, $\{\lambda_1^i,...,\lambda_{T+1}^i\}$ for $i\in\mathcal{N}$ is a sequence of costate scalar associated with the affine-quadratic dynamic game among users.
\end{lemma}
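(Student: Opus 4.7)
The plan is to apply the discrete-time Pontryagin maximum principle to each follower's individual optimal control problem after fixing the leader's announced supply $\{s_t\}$ and conjecturing the equilibrium trajectories of the other followers. For each $i\in\mathcal{N}$, user $i$ then faces the deterministic program of minimizing (\ref{UserCost2}) over $d^i\in\mathbf{R}_+^T$ subject to the scalar linear state equation (\ref{PriceDynamics2}); a Nash equilibrium among followers corresponds to a fixed point of these best-response maps, which Pontryagin's conditions characterize as a coupled two-point boundary value problem.

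First I would form the discrete-time Hamiltonian of user $i$,
\begin{equation*}
H_t^i = -\bar{\psi}_t^i d_t^i + p_t d_t^i + \tfrac{1}{2}\alpha_t^i (d_t^i)^2 + \lambda_{t+1}^i\bigl[(1+\omega_t)p_t + \gamma\bigl(d_t^i + \sum_{j\neq i}d_t^j - s_t\bigr)\bigr],
\end{equation*}
with adjoint scalar $\lambda_{t+1}^i$. The adjoint equation $\lambda_t^i = \partial H_t^i/\partial p_t = (1+\omega_t)\lambda_{t+1}^i + d_t^i$, together with the transversality condition $\lambda_{T+1}^i = 0$ for the free terminal price, is exactly (\ref{CostateDynamics}). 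The state equation evaluated at equilibrium demands reproduces (\ref{NashNecDy}).

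Next I would treat stationarity of $H_t^i$ in the control $d_t^i$ under the non-negativity constraint via KKT: introducing a multiplier $\mu_t^i \geq 0$, the conditions $\partial H_t^i/\partial d_t^i - \mu_t^i = 0$, $\mu_t^i d_t^i = 0$ become
\begin{equation*}
-\bar{\psi}_t^i + \hat{p}_t + \alpha_t^i \hat{d}_t^i + \lambda_{t+1}^i \gamma = \mu_t^i, \qquad \mu_t^i \hat{d}_t^i = 0, \qquad \mu_t^i \geq 0.
\end{equation*}
Eliminating $\mu_t^i$ by case analysis ($\hat{d}_t^i>0$ gives the affine expression; otherwise $\hat{d}_t^i = 0$ and the unconstrained minimizer is non-positive) yields the projected formula (\ref{OptimalResponse}). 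Sufficiency of these necessary conditions follows because each user's reduced cost functional is strictly convex in $d^i$ (the stage term is strictly convex since $\alpha_t^i>0$, and the $p_t d_t^i$ coupling is affine in $d^i$ through the linear dynamics), so the KKT system pins down the unique best response and hence the Nash response to $\mathbf{s}$.

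The main obstacle I anticipate is not the first-order calculus, which is routine, but rather the bookkeeping around the inequality constraint $d_t^i\geq 0$: making sure the $\max\{\cdot,0\}$ projection is justified uniformly in $t$ and $i$ and that the resulting forward-backward system (state $\hat{p}_t$ integrated from $p_1$, each costate $\lambda_t^i$ integrated backward from $\lambda_{T+1}^i=0$, all coupled through $\hat{d}_t^i$) admits a solution. Existence can be argued by invoking the linear-quadratic structure noted in the paragraph preceding the lemma, which guarantees existence of a pure Nash equilibrium for the equivalent static infinite game obtained by backward substitution; the lemma's claim is then precisely that any such equilibrium admits the stated Pontryagin representation.
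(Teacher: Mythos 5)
Your proposal is correct and follows essentially the same route as the paper, which simply invokes the discrete-time minimum principle for each follower's best-response problem and the definition of a Nash equilibrium; you have filled in the Hamiltonian, adjoint, and KKT details that the paper leaves to the cited references. The existence argument you sketch via the equivalent static game is likewise the one the paper gestures at in the paragraph preceding the lemma.
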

\begin{proof}
Utilize the definition of the Nash equilibrium in dynamic game \cite{TBBook}. The result follows directly from the minimum principle for discrete-time control systems \cite{TBBook, Bertsekas05}.
\end{proof}
\begin{remark}
From (\ref{OptimalResponse}) and (\ref{CostateDynamics}), we can infer that $\lambda_t$ is nonincreasing over $t$ which indicates the users prefer to postpone their consumption. If the $\bar{\psi}_t^i$ is sufficiently large, $\hat{d}_t^i$ is positive. If $\hat{d}_t^i \in \mathbf{R}_{++}$ for all $i\in\mathcal{N}$ and $t\in\mathbf{R}$, the conditions in Lemma \ref{NashNec} can be further developed.
\end{remark}

\begin{theorem}\label{NashSol}
Given $s_t,t\in\mathbf{T}$, if $\hat{d}_t^i$ in Lemma \ref{NashNec} is an inner-point solution for all $i\in\mathcal{N}$ and $t\in\mathbf{T}$, the Nash equilibrium $\{\hat{d}_t^i,i\in\mathcal{N},t\in\mathbf{T}\}$ of the affine-quadratic dynamic game and the corresponding dynamics are given by
\begin{equation}\label{NashSolState}
\begin{aligned}
\hat{p}_{t+1}&=(1+\omega_t)\hat{p}_t+\gamma(\sum\limits_{i\in\mathcal{N}} \hat{d}_t^i - s_t),\ \ \ \hat{p}_1=p_1,\\
\end{aligned}
\end{equation}

\begin{equation}\label{NashSolCon}
\begin{aligned}
\hat{d}_t^i&=\frac{1}{\alpha_t^i}(\frac{q_t^i-1}{\alpha_t^i(1+\omega_t-\gamma/\alpha_t^i)}-1)\hat{p}_t\\\
&+\frac{\bar{\psi}_t^i}{\alpha_t^i}(\frac{1}{\alpha_t^i(1+\omega_t-\gamma/\alpha_t^i)}+1)\\
&+\frac{b_t^i}{{\alpha_t^i}^2(1+\omega_t-\gamma/\alpha_t^i)},
\end{aligned}
\end{equation}
for all $i\in\mathcal{N}$ and $t\in\mathbf{T}$, where
\begin{equation}\label{RiccatiLike}
\begin{aligned}
q_t^i&=\frac{(1+\omega_t-\gamma/\alpha_t^i)(1+\omega_t-\gamma\sum_{j\in\mathcal{N}}1/\alpha_t^j)}{1+\gamma^2\sum_{j\in\mathcal{N}} q_{t+1}^j/\alpha_t^j }q_{t+1}^i\\
&-1/\alpha_t^i,\ \ \ q_{T+1}=0;\\
\end{aligned}
\end{equation}

\begin{equation}\label{RiccatiLike2}
\begin{aligned}
b_t^i&=(1+\omega_t-1/\alpha_t^i)b_{t+1}^i+\bar{\psi_t^i}/\alpha_t^i\\
&+\frac{\gamma(\sum_{j\in\mathcal{N}}\bar{\psi}_t^j/\alpha_t^j-s_t-\gamma\sum_{j\in\mathcal{N}}b_{t+1}^j/\alpha_t^j)}{1+\gamma^2\sum_{j\in\mathcal{N}}q_{t+1}^j /\alpha_t^j},\\
&b_{T+1}=0,
\end{aligned}
\end{equation}
for all $i\in\mathcal{N}$ and $t\in\mathbf{T}$.
\end{theorem}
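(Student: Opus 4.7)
The plan is to reduce the coupled two-point boundary value problem (\ref{NashNecDy})--(\ref{CostateDynamics}) of Lemma \ref{NashNec} to a backward sweep by postulating an affine ansatz
\[
\lambda_t^i = q_t^i \hat{p}_t + b_t^i, \qquad i \in \mathcal{N},\ t\in \mathbf{T},
\]
with terminal conditions $q_{T+1}^i=0$ and $b_{T+1}^i=0$ forced by $\lambda_{T+1}^i=0$. The interior-point hypothesis kills the $\max\{\cdot,0\}$ in (\ref{OptimalResponse}), so the optimal response collapses to the affine expression $\hat{d}_t^i = (\bar{\psi}_t^i - \hat{p}_t - \gamma\lambda_{t+1}^i)/\alpha_t^i$, and the entire system becomes linear in the state/costate pair $(\hat{p}_t, \{\lambda_t^i\})$.

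First I would substitute the ansatz for $\lambda_{t+1}^i$ into $\hat{d}_t^i$ and plug the result into (\ref{NashNecDy}). Summing the demands over $i \in \mathcal{N}$ produces a single scalar equation that is linear in $\hat{p}_{t+1}$, whose solution yields $\hat{p}_{t+1}$ as an explicit affine function of $\hat{p}_t$, $s_t$, $\{\bar{\psi}_t^j\}$ and $\{b_{t+1}^j\}$. This is precisely the step in which the normalizing factor $1+\gamma^2\sum_{j\in\mathcal{N}} q_{t+1}^j/\alpha_t^j$ appearing in both (\ref{RiccatiLike}) and (\ref{RiccatiLike2}) emerges as the coefficient of $\hat{p}_{t+1}$, and the aggregate mismatch $\sum_j \bar{\psi}_t^j/\alpha_t^j - s_t - \gamma\sum_j b_{t+1}^j/\alpha_t^j$ emerges as its forcing term.

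Next I would substitute this closed form for $\hat{p}_{t+1}$ into the rewritten costate recursion
\[
\lambda_t^i = (1+\omega_t - \gamma/\alpha_t^i)\lambda_{t+1}^i + \bar{\psi}_t^i/\alpha_t^i - \hat{p}_t/\alpha_t^i,
\]
obtained by eliminating $\hat{d}_t^i$ between (\ref{CostateDynamics}) and (\ref{OptimalResponse}). Imposing the ansatz on the left-hand side and matching the coefficient of $\hat{p}_t$ yields the Riccati-like recursion (\ref{RiccatiLike}) for $q_t^i$, while matching the constant part yields (\ref{RiccatiLike2}) for $b_t^i$. Plugging the now-known $\lambda_{t+1}^i$ back into the formula for $\hat{d}_t^i$ and collecting the coefficients of $\hat{p}_t$, $\bar{\psi}_t^i$, and $b_t^i$ then produces (\ref{NashSolCon}).

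The main obstacle is the algebraic bookkeeping: because the price couples every follower, the aggregate sums $\sum_j(\cdot)/\alpha_t^j$ propagate through every step and must be simplified without mixing up the individual index $i$ with the aggregated index $j$; in particular, the factor $(1+\omega_t - \gamma/\alpha_t^i)(1+\omega_t - \gamma\sum_j 1/\alpha_t^j)$ in the numerator of (\ref{RiccatiLike}) arises from composing the single-player costate transition with the aggregate-price equation, and correctly identifying its two distinct origins is the delicate part. A secondary concern is the well-posedness of the backward sweep, which requires $1+\gamma^2\sum_j q_{t+1}^j/\alpha_t^j \neq 0$ at every $t$; this is an implicit regularity condition on the horizon $T$ and the parameters $\gamma$, $\omega_t$, $\alpha_t^i$ that must accompany the statement for the explicit formulas to be defined throughout $\mathbf{T}$.
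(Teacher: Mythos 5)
Your proposal follows essentially the same route as the paper's own (very terse) proof: posit the affine costate ansatz $\lambda_t^i = q_t^i \hat{p}_t + b_t^i$, use the interior-point hypothesis to linearize the best response, solve the aggregated price equation for $\hat{p}_{t+1}$ (which is exactly where the factor $1+\gamma^2\sum_{j\in\mathcal{N}} q_{t+1}^j/\alpha_t^j$ arises), and match coefficients in the costate recursion to obtain (\ref{RiccatiLike}), (\ref{RiccatiLike2}) and (\ref{NashSolCon}). Your explicit identification of the well-posedness condition $1+\gamma^2\sum_{j\in\mathcal{N}} q_{t+1}^j/\alpha_t^j \neq 0$ is a worthwhile addition that the paper omits, and carrying out your coefficient-matching carefully is also the right way to audit the stated constants in (\ref{RiccatiLike2}).
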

\begin{proof}
Given the set of relations in Lemma \ref{NashNec}, let $\lambda^i_t$ be in the following form $\lambda_t^i=q_t^i p_t+ b_t^i$. Substitute the $\lambda_t^i$ and $\lambda_{t+1}^i$ in (\ref{OptimalResponse}) (\ref{CostateDynamics}) by $q_t^i p_t+ b_t^i$ and $q_{t+1}^i p_t+ b_{t+1}^i$ respectively, then pulg equation (\ref{OptimalResponse}) into (\ref{NashNecDy}). Finally, combine equation (\ref{NashNecDy}) and (\ref{CostateDynamics}) to obtain the above presented closed-form solution.
\end{proof}
\begin{remark}
Theorem \ref{NashSol} provides an efficient way to compute the Nash equilibrium of the affine-quadratic dynamic game given supply schedule $\mathbf{s}$. This result is a stepping stone for characterizing the Stackelberg equilibrium of the original problem. Compared with \cite{Quanyan12} that only gives a closed-form Nash solution for Homogeneous users, we obtain a closed-form Nash solution for a more general case.
\end{remark}

With the optimal response of the users, to characterize the Stackelberg equilibrium, one way is to view it as a finite horizon discrete-time optimal control problem. Another way is to form a nonlinear programming problem with objective function (\ref{SupplierCost2}) subject to the constraints provided by (\ref{NashSolState}-\ref{RiccatiLike2}). Here, we apply the second method and develop the following theorem to characterize the Stackelberg Equilibrium by utilizing the optimal response of users characterized in Theorem \ref{NashSol}.

\begin{theorem}
If $\{ \mathbf{s}^*\}$ denotes an open-loop Stackelberg equilibrium strategy for the supplier (leader) in the dynamic game in the form of (\ref{PriceDynamics2}) (\ref{UserCost2}) and (\ref{SupplierCost2}), and $\{\mathbf{d}^{i*},i\in\mathcal{N}\}$ is the corresponding best response of the users (followers), there exists vector sequences $\{\theta_1,...,\theta_T\}$, $\{\mu^i_1,...,\mu^i_T\}$, $\{\nu^i_1,...,\nu^i_T\}$ for $i\in\mathcal{N}$ such that the relations (\ref{NashSolCon}) (\ref{RiccatiLike2}) and the following relations are satisfied
\begin{equation}\label{StackState}
\begin{aligned}
p_{t+1}^*&=(1+\omega_t)p_t^*+\gamma(\sum\limits_{i\in\mathcal{N}} d_t^{i*} - s_t^*),\ \ \ p_1^*=p_1,
\end{aligned}
\end{equation}

\begin{equation}\label{StackLeaderCon}
\begin{aligned}
s_t^*&=\frac{p_t^* +\kappa_t d_t^*+\gamma \theta_t}{\bar{\delta}_t +\kappa_t }\\
&+\frac{[\gamma/(1+\gamma^2\sum_{j\in\mathcal{N}} q_{t+1}^j/\alpha_{t}^j)] \sum_{j\in\mathcal{N}}\mu_t^i }{\bar{\delta}_t +\kappa_t };\ \ \ \ \ \ \ \ \ 
\end{aligned}
\end{equation}

\begin{align}\label{StackLeaderCon2}
&\kappa_t(d_t^*-s_t^*)+\gamma \theta_t +\nu_t^i=0,\ \ \ \forall i\in\mathcal{N};\ \ \ \ \ \ \ \ \ \ \ \ \ \ \ 
\end{align}

\begin{equation}\label{StackTheta}
\begin{aligned}
    \theta_{t-1}&=(1+\omega_t)\theta_t - s_t^*-\sum\limits_{j\in\mathcal{N}} \frac{\nu_t^i (q_t^i -1)}{{\alpha_t^i}^2 (1 + \omega_t-\gamma/\alpha_t^i)},\\
    \theta_T&=0;
\end{aligned}
\end{equation}

\begin{equation}\label{StackMu}
\begin{aligned}
\mu_{t+1}^i&=\Big\{ 1+\omega_t -\frac{1}{\alpha_t^i}  -\frac{\gamma^2/\alpha_t^i}{1+\gamma^2\sum_{j\in\mathcal{N}} q_{t+1}^j/\alpha_t^j}    \Big\}\mu_t^i\\
&+\frac{\nu_{t+1}^i}{{\alpha_{t+1}^i}^2(1+\omega_{t+1}-\gamma/\alpha_{t+1}^i) },\\
\mu_{1}^i&=0,\ \ \ \forall i\in\mathcal{N}
\end{aligned}
\end{equation}
for all $t\in\mathbf{T}$, where $q_t^i,i\in\mathcal{N},t\in\mathbf{T}$ are given by (\ref{RiccatiLike}).
\end{theorem}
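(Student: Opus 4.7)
The plan is to cast the leader's problem as a finite-horizon nonlinear program (NLP) using the followers' Nash response characterization from Theorem \ref{NashSol} as constraints, and then read off the stated relations as the first-order (KKT) necessary conditions. Concretely, the leader minimizes the cost functional (\ref{SupplierCost2}) over the variables $\{s_t^*,\, d_t^{i*},\, p_t^*,\, b_t^i\}$, subject to three families of equality constraints: the price dynamics (\ref{StackState}), the explicit best-response formula (\ref{NashSolCon}) for each follower $i$ and each time $t$, and the backward recursion (\ref{RiccatiLike2}) defining the auxiliary sequence $b_t^i$. The quantities $q_t^i$ from (\ref{RiccatiLike}) are exogenous in this NLP because they do not depend on $\mathbf{s}$, so they need not appear with their own multipliers.

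Next, I would attach Lagrange multipliers $\theta_t$ to the price-dynamics equations, $\nu_t^i$ to the best-response equations (\ref{NashSolCon}), and $\mu_t^i$ to the $b_t^i$-recursion (\ref{RiccatiLike2}), and form the Lagrangian $\mathcal{L}$. Stationarity with respect to $s_t^*$ gives (\ref{StackLeaderCon}) after isolating $s_t^*$: the term $\bar\delta_t s_t - p_t + \kappa_t(s_t-d_t)$ comes from the leader's stage cost, the $\gamma\theta_t$ contribution comes from $\partial/\partial s_t$ of the dynamics constraint, and the $\mu_t^i$ contribution comes from the $s_t$-dependence of $b_t^i$ through (\ref{RiccatiLike2}), whose $\partial b_t^i/\partial s_t = -\gamma/(1+\gamma^2\sum_{j} q_{t+1}^j/\alpha_t^j)$ factor is precisely what appears in the second summand. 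Stationarity with respect to $d_t^{i*}$ yields (\ref{StackLeaderCon2}), since $d_t^{i*}$ enters the leader's cost only through the penalty $\tfrac12\kappa_t(s_t-d_t)^2$, the dynamics through $+\gamma$, and the best-response constraint (where $\nu_t^i$ is the multiplier).

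Stationarity with respect to the state $p_t^*$ produces the backward costate recursion (\ref{StackTheta}); here $p_t^*$ appears in the current-stage price-sales term $-s_t p_t$ (absorbed into $-s_t^*$), in $(1+\omega_t)\theta_t$ from the price dynamics, and in the follower best-response formula (\ref{NashSolCon}) where its coefficient $\tfrac{1}{\alpha_t^i}\bigl(\tfrac{q_t^i-1}{\alpha_t^i(1+\omega_t-\gamma/\alpha_t^i)}-1\bigr)$ multiplies $\nu_t^i$, which collapses after simplification into the summand shown. Finally, stationarity with respect to the auxiliary variables $b_t^i$ delivers (\ref{StackMu}): $b_t^i$ appears on the left-hand side of its own recursion (contributing $-\mu_t^i$), on the right-hand side of the recursion at time $t-1$ (contributing the propagated $\mu$-terms with the bracketed coefficient), and in the best-response formula through the $b_t^i/[\alpha_t^{i\,2}(1+\omega_t-\gamma/\alpha_t^i)]$ term, giving the $\nu$-coupling. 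The transversality/boundary values $\theta_T=0$ and $\mu_1^i=0$ arise because $p_{T+1}^*$ and $b_1^i$ are not constrained at those endpoints (no equation involving them on the other side of the recursion).

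The main obstacle I expect is bookkeeping rather than conceptual: one must sum over all $j\in\mathcal{N}$ carefully when differentiating $b_t^i$'s recursion (since $b_{t+1}^j$ for all $j$ appears in each equation), and one must verify that the algebraic simplification of the $\nu_t^i$-coefficient in the $p_t^*$-stationarity condition really produces the compact form in (\ref{StackTheta}). Since Theorem \ref{NashSol} guarantees the follower map $\Gamma(\mathbf{s})$ is differentiable in $\mathbf{s}$, the NLP reformulation is justified and no constraint qualification issues arise beyond those already handled by the inner-point assumption inherited from Theorem \ref{NashSol}.
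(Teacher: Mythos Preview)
Your proposal is correct and follows essentially the same approach as the paper: recast the leader's problem as a finite-dimensional NLP with the follower best-response relations (\ref{NashSolCon}) and (\ref{RiccatiLike2}) and the price dynamics as equality constraints, attach multipliers $\theta_t,\nu_t^i,\mu_t^i$ to these three families respectively, and read off (\ref{StackLeaderCon})--(\ref{StackMu}) from the stationarity conditions in $s_t,d_t^i,p_t,b_t^i$. Your write-up is in fact more explicit than the paper's, which simply names the Lagrangian and lists the four partial derivatives to be set to zero without tracing where each term comes from.
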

\begin{proof}
Since the problem is an optimal control problem in discrete-time, it equivalent to a finite dimensional nonlinear programming problem. To simplify the statement of the proof, we write equation (\ref{NashSolCon}) as $d_t^i=\Gamma^i_t(p_t,b_t^i)$, also we write equation (\ref{RiccatiLike2}) as $b_t^i=\Phi_t^i(b_{t+1}^i,\mathbf{b}_{t+1}^{-i})$. The Lagrangian of the nonlinear programming problem can be written as
\begin{equation}
\begin{aligned}
    L&=\Big\{ \sum\limits_{t=1}^T \frac{1}{2}\bar{\delta}_t s_t^2-s_t p_t + \frac{1}{2}\kappa_t (s_t-d_t)^2 \\
    &+\theta_t\{ (1+\omega_t)p_t +\gamma(d_t-s_t)\} \\
    &+\sum\limits_{i\in\mathcal{N}}\mu_t^i\{\Phi_t^i(b_{t+1}^i,\mathbf{b}_{t+1}^{-i})-b_{t}^i \}+\sum\limits_{i\in\mathcal{N}}\nu_t^i\{\Gamma_t^i(p_t,b_t^i)-d_t^i\} \Big\}
\end{aligned}
\end{equation}
where $\theta_t$, $\mu_t^i$ and $\nu_t^i$ for $i\in\mathcal{N},t\in\mathbf{T}$ denote appropriate Lagrangian multipliers. If $\{\mathbf{s}^*\}$ is a minimizing solution and $\{\mathbf{d}^{i*}_t, b_t^i,i\in\mathcal{N},p_t^*,t\in\mathbf{T}\}$ are the corresponding values of the other variables, so that the constraints in Theorem \ref{NashSol} are satisfied. Then it is necessary that 
\begin{equation*}
    \frac{\partial L}{\partial s_t}=0,\ \ \ \frac{\partial L}{\partial d_t}=0,\ \ \ \frac{\partial L}{\partial p_t}=0,\ \ \ \frac{\partial L}{\partial b_{t+1}}=0,\ \ \ t\in\mathbf{T},
\end{equation*}
wherefrom (\ref{NashSolCon}) (\ref{RiccatiLike2}) and (\ref{StackState})-(\ref{StackMu}).
\end{proof}

\begin{remark}
Note that $\gamma$, $\omega_t$, $\kappa_t$, $\bar{\delta}_t$, $\alpha_t^i$, $\bar{\psi_t^i}$ for $i\in\mathcal{N},t\in\mathbf{T}$ are system parameters that are known. And $q_t^i$ for $i\in\mathcal{N},t\in\mathbf{T}$ can be computed independently based on the rule (\ref{RiccatiLike}). Then the relations (\ref{NashSolCon}) (\ref{RiccatiLike2}) together with the relations (\ref{StackState}-\ref{StackMu}) form a set of linear constraints that characterizes the Stackelberg equilibrium of the linear-quadratic game. There are $(4N+3)T$ constraints and $(4N+3)T$ unknowns where $N$ is the number of users. From (\ref{StackLeaderCon}), we know that as the price goes up, the supplier tends to generate more energy to gain more profit. Also, the supply increases as the demand increases.
\end{remark}

\section{Simulation}\label{Sec:Sim}

In this section, we present simulation results to show the demand response and the supply response under market-driven dynamic price. We assume that there are $8$ users in the market, i.e., $N=8$. Let $\gamma=0.05$, $\omega_t=-0.05$. Suppose $\alpha_t^1=\alpha_t^2=\alpha=0.5,\alpha_t^3=\alpha_t^4=\alpha_t^5=0.6,\alpha_t^6=\alpha_t^7=\alpha_t^8=0.7$ for all $t\in\mathbf{T}$ and $\kappa_t=\kappa=0.3$ for all $t\in\mathbf{T}$. We consider the horizon $T=20$ days. Suppose each user have $3-5$ appliances including refrigerator-freezer (daily usage: 1.32kWh), electric stove (daily use: 7.89kWh), heating (daily usage: 7.1KWh), lighting (daily usage: 1KWh), Home electronics (daily usage: 0.5 KWh). We assign weights $\alpha^{i,a}$ proportional to the level of daily usage and normalize them between $0$ to $q$. And we set $q=5$. \begin{figure}
\centering
  \includegraphics[width=8cm]{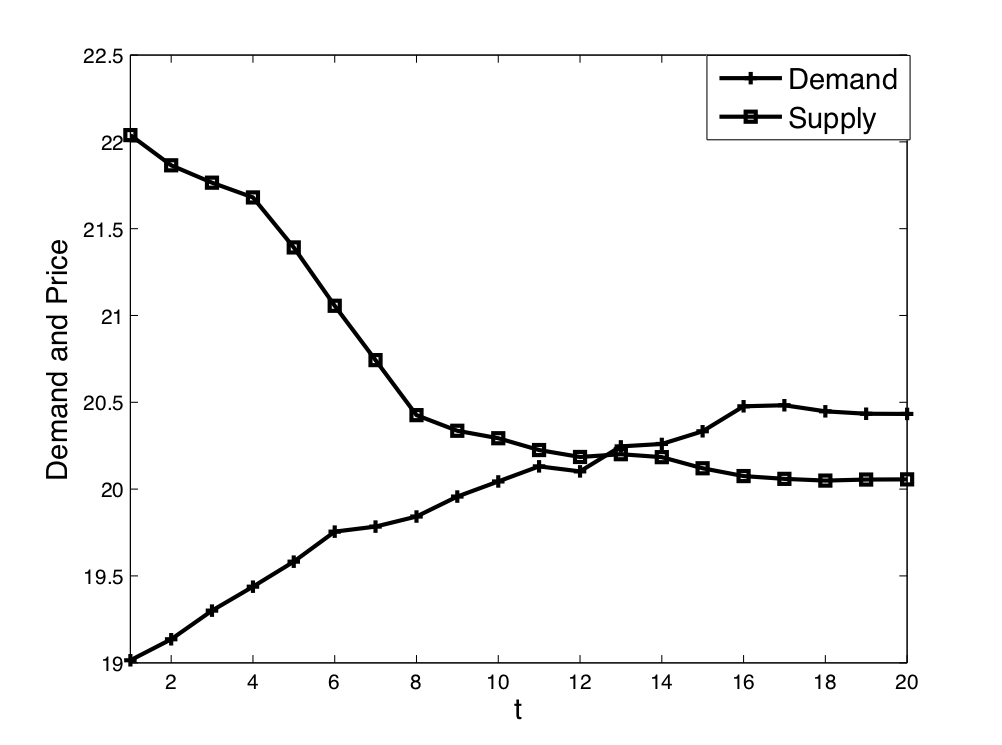}
  \caption{The total demand of all users $d_t$ and the supply $s_t$ for $t=1,...,20$}
  \label{fig:DemandSupply}
\end{figure}

Suppose the supplier are equipped with $3$ generators including fixed gas thermal generator, solar generator and hydroelectric generator with cost parameters $\delta_t^1=1.5$, $\delta_t^2=0.8$, $\delta_t^3=1$ dollar/KWh$^2$. Choose initial price $p_1=1.5$ dollar/KWh. In Fig. \ref{fig:DemandSupply}, we present the optimal supply and the optimal demand response from time $1$ to time $20$ based on the linear relations that characterize the Stackelberg equilibrium. We can see that the supply and the demand is not balanced at the beginning days. The demand and the supply are tuned by the dynamic price and get closer. The supply does not exactly match with the demand but they are almost balanced in the later days. The evolution of the price from time $1$ to time $21$ is presented in Fig. \ref{fig:Price}. We set the initial price as $1.5$. The price goes all way down to $0.26$ which indicates the market drives the price to a reasonable value.

\begin{figure}[H]
\centering
  \includegraphics[width=8cm]{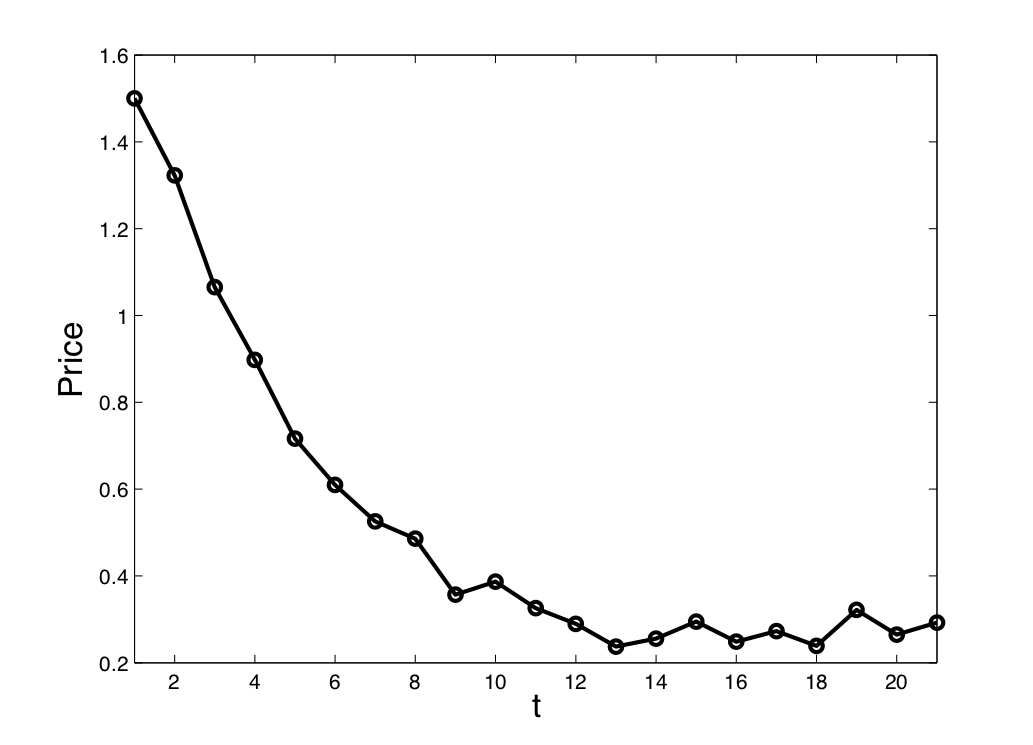}
  \caption{The evolution of the price $p_t$ over time $t=1,...,21$.}
  \label{fig:Price}
\end{figure}

\section{Conclusion}\label{Sec:Con}
In this paper, we have jointly studied the optimal supply and the distributed demand response under market-driven dynamic price. A Stackelberg dynamic game has been utilized to model the intertwined interactions among price, demand and supply. At the lower level, a set of static optimization problems has been formulated to solve the optimal power allocation problems for users and the optimal generation allocation problems for the supplier. Given the supply schedule, we characterize the optimal response from the uses by characterizing the Nash equilibrium solution using a set of necessary conditions. Further, the Stackelberg equilibrium has been characterized by a set of linear constraints. Simulation results have shown that the supply and the demand are tuned by the dynamic price to be balanced.

Future work can focus on studying the case with multiple energy providers and multiple users. Also, a Stackelberg dynamic game with a large population of followers and how the response would be when the number of followers $N$ approaches infinity are also of interest. For the lower level,  we can include storage devices and study their impact on the generation cost, users' utility and the entire power grid.


%

\ifCLASSOPTIONcaptionsoff
  \newpage
\fi

\bibliography{references}
\bibliographystyle{IEEEtran}
%




\end{document}